\newtheorem{theorem}{Theorem}
\def\BibTeX{{\rm B\kern-.05em{\sc i\kern-.025em b}\kern-.08em
    T\kern-.1667em\lower.7ex\hbox{E}\kern-.125emX}}
\begin{document}

\title{\LARGE \bf
Control of Vehicle Platoons with Collision Avoidance Using Noncooperative Differential Games*}

\author{Hossein B. Jond$^{1}$
\thanks{*This work was supported by SGS, V\v{S}B - Technical University of Ostrava, Czech Republic, under grant No. SP2023/012 “Parallel processing of Big Data X”.}
\thanks{$^{1}$Hossein B. Jond is with the Department of Computer Science, V\v{S}B-Technical University of Ostrava, 708 00 Ostrava-Poruba, Czech Republic
        {\tt\small hossein.barghi.jond@vsb.cz}}%
}

\maketitle

\begin{abstract}
This paper considers a differential game approach to the predecessor-following vehicle platoon control problem without and with collision avoidance. In this approach, each vehicle tries to minimize the performance index (PI) of its control objective, which is reaching consensual velocity with the predecessor vehicle while maintaining a small inter-vehicle distance from it. Two differential games were formulated. The differential game problem for platoon control without collision avoidance is solved for the open-loop Nash equilibrium and its associated state trajectories. The second differential game problem for platoon control with collision avoidance has a non-quadratic PI, which poses a greater challenge to obtaining its open-loop Nash equilibrium. Since the exact solution is unavailable, we propose an estimated Nash strategy approach that is greatly simplified for implementation. An illustrative example of a vehicle platoon control problem was solved under both the without and with collision avoidance scenarios. The results showed the effectiveness of the models and their solutions for both scenarios.
\end{abstract}

\begin{IEEEkeywords}
collision avoidance, differential game, Nash equilibrium, vehicle platoon 
\end{IEEEkeywords}

\section{Introduction}
Convoy and platoon group driving are the salient collective behaviors of connected and automated vehicles on the road~\cite{Bergenhem2012,Soni2018}. Vehicles in a platoon or convoy drive at a consensual speed in the direction of the flow of traffic while maintaining a small inter-vehicle distance from their adjacent vehicles. In a platoon, we are concerned only with the longitudinally coordinated control of vehicles moving in the same lane of the road or highway~\cite{Guo2011,JondIET2023}. In a convoy, both longitudinal and lateral coordination of vehicles over different lanes is necessary~\cite{QianFM16,JondIEEE2023}. 

Platooning is the most studied group behavior of connected and automated vehicles~\cite{WOO2021103442,Lesch2022}. Such coordination is achieved by exchanging local information among the vehicles~\cite{Abualhoul2013}. Vehicle platoons offer remarkable benefits, as listed in~\cite{Bergenhem2012,Wang2020,Lesch2022}. Platooning boosts road capacity and decreases fuel consumption and emissions of pollutants due to the decrease in gaps between vehicles and the elimination of dispensable changes in speed and aerodynamic drag on the following vehicles, respectively. Besides, driving safety and passenger satisfaction are enhanced since detection and actuation times are shorter, and the small inter-vehicle gaps between vehicles prevent cut-ins by other vehicles. The most common platooning methods are the Leader-Follower approach~\cite{Dunbar2012}, the Behavior-Based Approach~\cite{FIROOZI2021104714,Antonelli2010}, and the Virtual Structure approach~\cite{ELZAHER20121503,Ren2004}. Several other methods were also researched~\cite{Semsar-Kazerooni,Hao2023,Li2018-01-1646}.

In the classical optimal control framework, vehicles attempt to acquire a platoon formation by optimizing a team objective~\cite{Morbidi2013}. This framework, however, is incompatible with automated and autonomous vehicles, which are supposed to make independent and selfish operational decisions without the need for human intervention. Game theory provides tools and concepts for determining the best strategy or action choices for each vehicle with self-interests and acting selfishly. A vehicle’s interest in a platoon could be to penalize its relative displacement, velocity, and acceleration errors, taking its fuel amount into account~\cite{LIN201454}. The strategic interactions among vehicles acting independently and selfishly naturally portray a noncooperative game. Nash equilibrium allows for self-enforcing strategic interactions in a noncooperative game~\cite{van2012refinements}. Platooning emerges as a result of a Nash equilibrium~\cite{JondIET2023}.

Game-theoretic platoon control has recently attracted increasing interest in the control community. Some recent reports include platooning at the hubs in a transportation network as a noncooperative coordination game~\cite{Johansson2023}, attacker-detector game for improving the security of platoons against cyber attacks~\cite{Basiri2019}, platoon formation as a coalitional game~\cite{Calvo2018}, and complete and incomplete information behavioral decision-making in a platoon using noncooperative game theory~\cite{Liu2023}. 

Differential games have been extensively used to address multi-robot systems formation control~\cite{Gu2008,LIN201454,Mylvaganam2017,Jond2019}. A platoon is a line formation. However, only a few research studies have utilized differential games for platooning. In~\cite{JondIET2023}, differential games for platooning under the predecessor-following and two-predecessor-following topologies for platoon control problems with vehicles governed by single integrator dynamics were solved analytically, and the closed-form expressions for the open-loop Nash equilibrium were derived. 

As the main contribution with respect to~\cite{JondIET2023}, this paper considers $i$) a linearized dynamics model that approximates the longitudinal dynamics of car-like vehicles and $ii$), collision avoidance. It is shown that a closed-form solution for the platoon control problem without collision avoidance in the context of a noncooperative differential game exists. Realizing that a closed-form solution for the game problem with collision avoidance is not available, we propose an estimated Nash strategy that is greatly simplified for implementation under an open-loop information structure. Both solutions' effectiveness is shown by the simulation studies. 

The paper is organized as follows. Section~\ref{formulation} presents a differential game model of the platoon control problem without collision avoidance. In Section~\ref{open-Nash}, we derive the open-loop Nash equilibrium and its associated opinion trajectories. The platoon control problem with collision avoidance is studied in Section~\ref{open-Nash-col}. In Section~\ref{simulation}, the results from previous sections are verified by simulations. Conclusions and future works are discussed in Section~\ref{conclusions}.

\section{Differential Game Formulation}\label{formulation}

We consider a homogeneous platoon of vehicles with the predecessor-following information topology as depicted in Fig.~\ref{fig:platoon}. Each vehicle follows its predecessor by maintaining a predefined fixed inter-vehicle distance using unidirectional information acquired directly from onboard sensors or via vehicle-to-vehicle connections in connected environments. The vehicles are equipped with cameras that detect their immediate preceding vehicle and laser scanners for measuring the distances. Suppose that there are $N+1$ vehicles in the platoon, indexed by $0$ through $N$ where $0$ corresponds to the lead vehicle and the rest to the following vehicles. The lead vehicle, or simply the leader, is at the front of the platoon and has a constant velocity. The following vehicles, or followers, adjust their control input to maintain their predefined distances from their predecessors.

\begin{figure}
\centering
\includegraphics[width=0.45\textwidth]{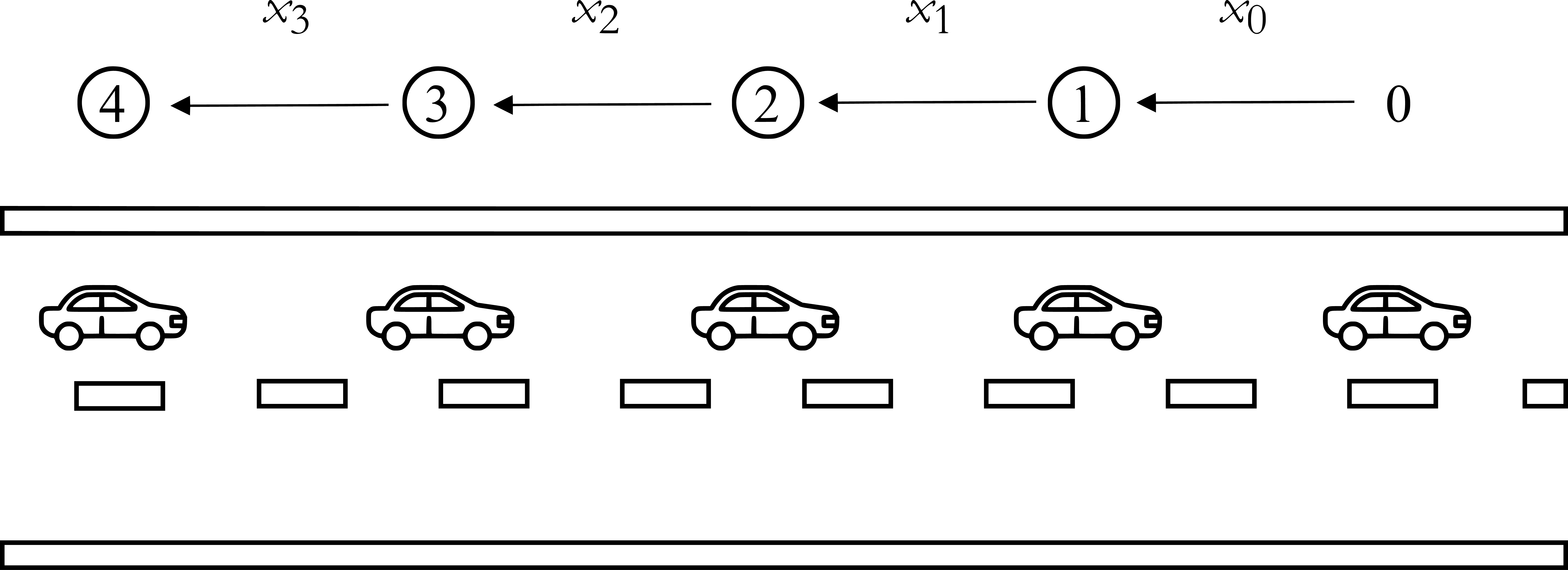}
\caption{A homogeneous vehicle platoon with predecessor-following topology. The corresponding information topology shows that each follower vehicle has the information of only its predecessor vehicle. } \label{fig:platoon}
\end{figure}

Vehicle dynamics is a nonlinear function of tire friction, rolling resistance, aerodynamic drag, gravitational force, the engine, the brake system, etc., fundamentally challenging theoretical analysis. Simplified nonlinear vehicle dynamics models have commonly been used to model vehicle longitudinal dynamics in a platoon. These nonlinear dynamics models govern the engine dynamics, brake system, and aerodynamic drag of each vehicle. By using the feedback linearization technique, as shown in~\cite{Xiao,Hu2020}, these nonlinear dynamics become linearized, which eases further theoretical analysis.

Let $p_i(t)$, $v_i(t)$, $a_i(t)$, and $u_i(t)$ denote the position, velocity, acceleration, and control input of the $i$th vehicle in the platoon, respectively. The engine time constant $\tau$ (also called the inertial time-lag) encompassed by the feedback linearized dynamics is, in reality, different even for identical vehicles. However, in this work, we consider a homogeneous platoon of follower vehicles with identical $\tau$. This assumption ensures that the platoon control problem that will be defined in this paper can be solved analytically with a closed-form solution. Each follower vehicle $i\in\{1,\ldots,N\}$'s feedback linearized dynamics is given by
\begin{align*}
    &    
\left\{
    	\begin{array}{lll}
		\dot p_i(t)=v_i(t)   
		&\\
		 \dot v_i(t)=a_i(t) 
		&\\
		 \tau\dot a_i(t)+a_i(t)=u_i(t) \\
	\end{array}
	\right.
\end{align*}
or, equivalently, in the following state-space form 
\begin{equation}\label{eq:dynamics}
    \dot{x}_i(t)=Ax_i(t)+Bu_i(t)
\end{equation}
where $x_i(t)=\begin{bmatrix}p_i(t)\\v_i(t)\\a_i(t)\end{bmatrix}$, $A=\begin{bmatrix}
0 & 1 & 0\\ 0 & 0 & 1\\0 & 0 & -\frac{1}{\tau}
\end{bmatrix}$, and $B=\begin{bmatrix}
0\\0\\\frac{1}{\tau}
\end{bmatrix}$. Note that the leader is supposed to move with constant velocity, i.e., $x_0=[p_0(t),v_0,0]^\top$, under the steady-state condition, i.e., $u_0(t)=0$. 

Follower vehicles try to maintain the inter-vehicular spacing $d_i$ between themselves and their immediate predecessors. Note that $d_i$ can be regarded as the length of each vehicle appended to it. The control objective is to ensure all the following vehicles reach consensual velocity with the lead vehicle while maintaining the predefined constant inter-vehicular spacing from their predecessors. In other words, the vehicle platoon shown in Fig.~\ref{fig:platoon} for any given bounded initial states $x_0(0)>x_1(0)>\cdots>x_N(0)$ achieves the desired platoon if control objectives 
\begin{align} \label{eq:objective}
   &\|x_{i-1}(T)-x_i(T)-\hat{d}_i\|^2\rightarrow 0
\end{align}
for all $i\in\{1,\ldots,N\}$ are satisfied for a sufficiently large finite time horizon $T$ and $\hat{d}_i=[d_i,0,0]^\top$. Note that although a homogeneous platoon of vehicles with identical dynamics is considered, the individual vehicles do not have to necessarily commit to an identical inter-vehicular spacing policy and can choose their own spacing policies according to their specifications.  

In the context of a differential game, the control objectives (\ref{eq:objective}) are transformed into PIs, each of which is supposed to be optimized by an individual vehicle in the platoon or a player in the game. The PI for each following vehicle $i\in\{1,\ldots,N\}$ is defined as 
\begin{align}\label{eq:cost}
    J_i=\omega_i\|x_{i-1}(T)-x_i(T)-\hat{d}_i\|^2+\int_0^T u_i^2(t)~\mathrm{dt}
\end{align}
where $\omega_i>0$ is a weighting parameter that penalizes the inter-vehicle displacement, and vehicles can adjust this parameter taking their personal interests or other personal factors such as their fuel amount in the tank into account. 

In this paper, we consider the platoon control problem (\ref{eq:dynamics}) and (\ref{eq:cost}) in the context of a noncooperative scenario differential game, where the notion of optimality is Nash equilibrium. A Nash equilibrium is a strategy combination of all players in a noncooperative game with the property that no one can gain a lower cost by unilaterally deviating from it.

\section{Open-Loop Nash Equilibrium}\label{open-Nash}

In the following theorem, we show the existence of a unique open-loop Nash equilibrium and then we present closed-form expressions for the equilibrium actions and their associated state trajectories for the underlying platoon control problem.

\begin{theorem}\label{theorem-Nash}
Consider a platoon of vehicles with the feedback linearized dynamics (\ref{eq:dynamics}) and PIs (\ref{eq:cost}). The platoon control problem as a noncooperative differential game admits a unique open-loop Nash equilibrium given by
    \begin{equation} \label{eq:notion-u}
        u_i(t)=-\sum_{j=1}^i\xi_j(t)
    \end{equation}
    where
    \begin{align}
        &\xi_i(t)=-\omega_iB^\top\mathrm{e}^{(T-t)A^\top}\left(I+\omega_i\Psi(T)\right)^{-1}\mathrm{e}^{TA}y_i(0),\label{eq:sol-xi}\\
        &\Psi(t)=\int_0^t\mathrm{e}^{(t-s)A}BB^\top\mathrm{e}^{(t-s)A^\top}\mathrm{ds}\label{eq:Psi}.
    \end{align}
The state trajectories associated with the equilibrium actions are given by
    \begin{equation} \label{eq:notion-x}
    x_i(t)=x_0(t)-\sum_{j=1}^i(y_j(t)+\hat{d}_j)
\end{equation}
where
\begin{align}\label{eq:traj}
    y_i(t)=\left(\mathrm{e}^{tA}-\omega_i\Psi(t) \left(I+\omega_i\Psi(T)\right)^{-1}\mathrm{e}^{TA}\right)y_i(0).
\end{align}
\end{theorem}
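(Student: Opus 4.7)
The plan is to apply Pontryagin's minimum principle (PMP) to each player's best-response problem, then exploit the lower-triangular coupling induced by the predecessor-following topology to solve the resulting system by induction on $i$. Under the open-loop information structure, player $i$ treats $u_{-i}$ as given time functions, so $x_{i-1}(T)$ appears as a fixed target and $J_i$ is a standard LQ terminal-cost problem in $x_i$. Forming the Hamiltonian $H_i = u_i^2 + \lambda_i^\top(A x_i + B u_i)$, I read off the PMP conditions $u_i = -\tfrac{1}{2} B^\top \lambda_i$, $\dot\lambda_i = -A^\top\lambda_i$, and $\lambda_i(T) = -2\omega_i\, y_i(T)$ with $y_i := x_{i-1}-x_i-\hat d_i$. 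Back-solving the adjoint gives
\[
u_i(t) = \omega_i B^\top \mathrm{e}^{(T-t)A^\top}\, y_i(T),
\]
which still contains the unknown terminal error $y_i(T)$.

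The heart of the proof is pinning down $y_i(T)$ self-consistently. Applying variation of constants to $x_i$ and $x_{i-1}$, and using the identity $\mathrm{e}^{TA}\hat d_i = \hat d_i$ (which follows from $A\hat d_i = 0$ since $\hat d_i$ has only a position component), I obtain a linear recursion relating $y_i(T)$ to $y_i(0)$ and $y_{i-1}(T)$. The base case $i=1$ closes because $u_0 \equiv 0$: the recursion collapses to $(I+\omega_1\Psi(T))y_1(T) = \mathrm{e}^{TA} y_1(0)$, giving $u_1 = -\xi_1$. The induction step postulates $u_j = -\sum_{k\le j}\xi_k$ for $j<i$; the resulting increment $u_{i-1}-u_i = \xi_i$ turns the error dynamics $\dot y_i = A y_i + B(u_{i-1}-u_i)$ into the decoupled equation $\dot y_i = A y_i + B\xi_i$, whose variation-of-constants solution is exactly (\ref{eq:traj}). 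The formula (\ref{eq:notion-x}) for $x_i(t)$ then follows from the definitional identity $x_i = x_0 - \sum_{j\le i}(y_j+\hat d_j)$ together with $x_0(t) = \mathrm{e}^{tA} x_0(0)$.

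Existence and uniqueness reduce to two observations. First, each $J_i$ is strictly convex in $u_i$ (the integrand $u_i^2$ has a positive coefficient), so the best response is unique given $u_{-i}$. Second, the predecessor-following topology makes the best-response map lower triangular in the sense that player $i$'s optimum depends only on $u_{i-1}$, so the fixed point of the joint best-response map can be constructed sequentially for $i=1,\ldots,N$ and is therefore unique. Invertibility of $I+\omega_i\Psi(T)$ holds because $\omega_i>0$ and the controllability-type Gramian $\Psi(T)$ is positive semidefinite.

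The main technical obstacle will be the bookkeeping in the induction step: verifying that the contribution of $u_{i-1}$ entering through the coupling term in $y_i(T)$ indeed combines with the ``own'' contribution of player $i$ to produce exactly the clean telescoping form $-\sum_{j\le i}\xi_j$ and the compact trajectory (\ref{eq:traj}). The underlying algebra relies on the commutativity of matrix exponentials in $A$ with themselves and on the identity $A\hat d_i = 0$, and this consistency computation is the least routine portion of the argument and the place where most care will be required.
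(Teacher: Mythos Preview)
Your approach is different from the paper's. The paper introduces the change of variables $y_i=x_{i-1}-x_i-\hat d_i$ and $\xi_i=u_{i-1}-u_i$ at the outset and asserts that the game decouples into $N$ independent single-player LQ problems $\min_{\xi_i}\omega_i\|y_i(T)\|^2+\int_0^T\xi_i^2\,\mathrm{d}t$ subject to $\dot y_i=Ay_i+B\xi_i$; it then applies PMP to each of these separately. You instead apply PMP directly in the $(x_i,u_i)$ variables and try to recover the stated formulas by induction along the chain.

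Your induction step has a genuine gap. From PMP you correctly obtain $u_i(t)=\omega_i B^\top\mathrm{e}^{(T-t)A^\top}y_i(T)$, where $y_i(T)$ is the \emph{actual} terminal error under equilibrium play. You then write ``the resulting increment $u_{i-1}-u_i=\xi_i$'', but this is the conclusion you are trying to establish, not a consequence of the induction hypothesis and PMP alone. If you actually carry out the variation-of-constants computation for $i\ge 2$ with $u_{i-1}=-\sum_{j<i}\xi_j$ given by the hypothesis, you get
\[
(I+\omega_i\Psi(T))\,y_i(T)=\mathrm{e}^{TA}y_i(0)+\sum_{j=1}^{i-1}\omega_j\,\Psi(T)\,(I+\omega_j\Psi(T))^{-1}\mathrm{e}^{TA}y_j(0),
\]
and the extra sum on the right does not vanish. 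Consequently $u_i(t)=\omega_i B^\top\mathrm{e}^{(T-t)A^\top}y_i(T)$ is \emph{not} equal to $-\sum_{j\le i}\xi_j(t)$ in general, and the ``clean telescoping form'' you anticipate in your last paragraph does not materialise. The divergence from the paper is precisely at the cost term: the paper's reduction silently replaces $\int_0^T u_i^2\,\mathrm{d}t$ by $\int_0^T\xi_i^2\,\mathrm{d}t$, which is where the coupling to $u_{i-1}$ is dropped; your direct route simply exposes that this coupling is still present in the original game.
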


\begin{proof}
Let $y_i(t)=x_{i-1}(t)-x_i(t)-\hat{d}_i$ and $\xi_i(t)=u_{i-1}(t)-u_i(t)$ for all $i\in\{1,\ldots,N\}$ where (\ref{eq:notion-x}) and (\ref{eq:notion-u}) are easily verified, respectively. 

Vehicle dynamics (\ref{eq:dynamics}) is then expressed in terms of the new state vector $y_i(t)$ and new control input $\xi_i(t)$ as
\begin{equation}\label{eq:dynamics-y}
    \dot{y}_i(t)=Ay_i(t)+B\xi_i(t).
\end{equation}
Therefore, the platoon control problem as the noncooperative differential game (\ref{eq:dynamics}) and (\ref{eq:cost}) reduces to the following optimization
\begin{align*}
    \min_{\xi_i}\mathcal{J}_i=\omega_iy_i^\top(T) y_i(T)+\int_0^T\xi_i^2(t)~\mathrm{dt}
\end{align*}
subject to (\ref{eq:dynamics-y}).

Define the Hamiltonian for the above minimization 
\begin{align}\label{eq:non-coop-hamil}
    H_i=\xi_i^2(t)+\lambda_i^\top(t) (Ay_i(t)+B\xi_i(t))
\end{align}
for all $i\in\{1,\ldots,N\}$ where $\lambda_i(t)$ is the costate. According to Pontryagin’s minimum principle, the necessary conditions for optimality are $\frac{\partial H_i}{\partial \xi_i}=0$ and $\dot \lambda_i(t)=-\frac{\partial H_i}{\partial y_i}$. Applying the necessary conditions on (\ref{eq:non-coop-hamil}) yield
\begin{align} 
    &\xi_i(t)=-B^\top\lambda_i(t),\label{eq:neccesary-u}\\
    &\dot{\lambda}_i(t)=-A^\top\lambda_i(t) ,\quad  \lambda_i(T)=\omega_i y_i(T) \label{eq:neccesary-lam}
\end{align}
for $i\in\{1,\ldots,N\}$.

The solution of (\ref{eq:neccesary-lam}) is given by
\begin{align}\label{eq:sol-lam}
    \lambda_i(t)=\mathrm{e}^{(T-t)A^\top}\lambda_i(T)=\omega_i\mathrm{e}^{(T-t)A^\top}y_i(T).
\end{align}
Substituting (\ref{eq:neccesary-u}) into (\ref{eq:dynamics-y}) and using (\ref{eq:sol-lam}), we have
\begin{align*}
   \dot{y}_i(t)&=Ay_i(t)-BB^\top\lambda_i(t)\nonumber\\&=Ay_i(t)-\omega_iBB^\top\mathrm{e}^{(T-t)A^\top}y_i(T)
\end{align*}
where its solution is given by
\begin{equation}\label{eq:sol-y}
   y_i(t)=\mathrm{e}^{tA}y_i(0)-\omega_i\Psi(t) y_i(T) 
\end{equation}
where $\Psi(t)$ is defined in (\ref{eq:Psi}). Consider (\ref{eq:sol-y}) at $T$ as
\begin{equation}\label{eq:sol-yT}
   y_i(T)=\mathrm{e}^{TA}y_i(0)-\omega_i\Psi(T) y_i(T). 
\end{equation}
Equation (\ref{eq:sol-yT}) can be rewritten as
\begin{equation*}
   \left(I+\omega_i\Psi(T)\right) y_i(T)=\mathrm{e}^{TA}y_i(0) 
\end{equation*}
or
\begin{equation}\label{eq:sol-yTT}
   y_i(T)=\left(I+\omega_i\Psi(T)\right)^{-1}\mathrm{e}^{TA}y_i(0). 
\end{equation}

Note that $y_i(T)$ exists for every initial condition $y_i(0)$ iff $\left(I+\omega_i\Psi(T)\right)^{-1}$ exists. In other words, the game has an open-loop Nash equilibrium for every initial states $x_0(0),\cdots,x_N(0)$ iff (\ref{eq:sol-yTT}) can be calculated for any arbitrary final state $y_i(T)$ and accordingly, $x_i(T)$. If
so, the equilibrium actions are unique and exist for all $t\in[0,T]$. Otherwise, the game does not have a unique open-loop Nash equilibrium for every initial states $x_0(0),\cdots,x_N(0)$.

In the following, we show that the matrix $I+\omega_i\Psi(T)$ is invertible.

From (\ref{eq:Psi}), we have 
\begin{equation*}
    \mathrm{e}^{(t-s)A}BB^\top\mathrm{e}^{(t-s)A^\top}=\mathrm{e}^{(t-s)A}B(\mathrm{e}^{(t-s)A}B)^\top.
\end{equation*}
The product of any matrix and its transpose is always symmetric. Thus, $\omega_i\Psi(T)$ is symmetric. The matrix $\mathrm{e}^{(t-s)A}$ is positive definite and all its eigenvalues are positive. The matrix $BB^\top$ is a nonnegative diagonal matrix, and then the eigenvalues of the product of $\mathrm{e}^{(t-s)A}BB^\top\mathrm{e}^{(t-s)A^\top}$ still have nonnegative real parts. Therefore, all the eigenvalues of $I+\omega_i\Psi(T)$ in (\ref{eq:Psi}) have positive real parts. 

Substituting (\ref{eq:sol-yTT}) into (\ref{eq:sol-y}) and rearranging it, we obtain (\ref{eq:traj}). Similarly, substituting (\ref{eq:sol-lam}) into (\ref{eq:neccesary-u}) and then (\ref{eq:sol-yTT}) into it, we get (\ref{eq:sol-xi}). 

\end{proof}

\section{Collision Avoidance Estimated Nash Strategy}\label{open-Nash-col}
The platoon control problem (\ref{eq:dynamics}) and PIs (\ref{eq:cost}) and its solution in \textit{Theorem~\ref{theorem-Nash}} satisfy only the control objectives of maintaining a constant inter-vehicular spacing with the predecessor and maintaining the consensual velocities and accelerations of all followers with the leader. In addition, each following vehicle in the platoon has to ensure the crucial requirement of collision avoidance. 

Control designs that simultaneously guarantee the time-headway spacing and collision avoidance in platoons were the focus of a few reports~\cite{Lunze,Schwab}. The differential game literature on collision avoidance is from multi-robot systems~\cite{Mylvaganam2017,Cappello}.  

For the platoon control problem with collision avoidance, the PI for vehicle $i$ is redefined as 
\begin{align}\label{eq:cost-col}
    \hat{J}_i=J_i+\frac{1}{\mu_i\|x_{i-1}(T)-x_i(T)-\hat{r}_i\|^2+\epsilon}
\end{align}
for all $i\in\{1,\ldots,N\}$ where $\mu_i>0$ is a weighting parameter, $\epsilon>0$ is a positive scalar to ensure a non-zero denominator, and $\hat{r}_i=[r_i,0,0]^\top$ where $r_i$ is a safe distance from the predecessor for collision avoidance. If vehicle $i$ gets closer to its predecessor than $r_i$, a collision is unavoidable. 

The platoon control problem with collision avoidance in (\ref{eq:dynamics}) and (\ref{eq:cost-col}) is non-trivial and challenging to solve for its closed-form solution. We attempt to constitute an estimation of the exact solution that guarantees the collision avoidance behavior of followers.  

Define the following positive scalar function of $y_i(t)$
\begin{equation}\label{eq:func-f}
    f(y_i(t))=\frac{1}{\left(\mu_i(y_i(t)+\hat{d}_i-\hat{r}_i)^\top(y_i(t)+\hat{d}_i-\hat{r}_i)+\epsilon\right)^2}.
\end{equation}
Also, define
\begin{align}\label{eq:terminal-state-est}
   z_i(t)=&\Big(I+\big(\omega_i-\mu_if(\mathrm{e}^{tA}y_i(0))\big)\Psi(t)\Big)^{-1}\nonumber\\&\Big(\mathrm{e}^{tA}y_i(0)-\mu_if(\mathrm{e}^{tA}y_i(0))\Psi(t)(\hat{r}_i-\hat{d}_i)\Big)
\end{align}
where $z_i(T)=\hat{y}_i(T)$.

\begin{theorem} \label{theorem-col}
Consider a platoon of vehicles with the feedback linearized dynamics given in (\ref{eq:dynamics}) and PIs in (\ref{eq:cost-col}). Suppose that every vehicle $i$ utilizes the following estimation of its terminal state vector $y_i(T)$ from (\ref{eq:terminal-state-est}), i.e., $\hat{y}_i(T)$. For the platoon control problem with collision avoidance as a noncooperative differential game, the following estimations of the unique Nash equilibrium form collision-avoidance control inputs 
    \begin{equation}\label{eq:Nash-est}
        \hat{u}_i(t)=-\sum_{j=1}^i\hat{\xi}_j(t)
    \end{equation}
    where
    \begin{align}
        &\hat{\xi}_i(t)=-B^\top\mathrm{e}^{(T-t)A^\top}\times\nonumber\\&\Big(\big(\omega_i-\mu_if(\mathrm{e}^{TA}y_i(0))\big)\hat{y}_i(T)+\mu_if(\mathrm{e}^{TA}y_i(0))(\hat{r}_i-\hat{d}_i)\Big)\label{eq:sol-xi-col}.
    \end{align}
The state trajectories associated with the equilibrium actions are given by
    \begin{equation} \label{eq:Nash-traj-est}
    \hat{x}_i(t)=x_0(t)-\sum_{j=1}^i(\hat{y}_j(t)+\hat{d}_j)
\end{equation}
where
\begin{align}\label{eq:traj-col}
    &\hat{y}_i(t)=\mathrm{e}^{tA}y_i(0)-\Psi(t)\times\nonumber\\&\Big(\big(\omega_i-\mu_if(\mathrm{e}^{TA}y_i(0))\big)\hat{y}_i(T)+\mu_if(\mathrm{e}^{TA}y_i(0))(\hat{r}_i-\hat{d}_i)\Big) .
\end{align}
\end{theorem}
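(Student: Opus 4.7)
The plan is to mirror the proof of \textit{Theorem~\ref{theorem-Nash}} as closely as possible, deviating only where the non-quadratic collision-avoidance term forces a detour. First I would re-use the change of variables $y_i(t) = x_{i-1}(t) - x_i(t) - \hat{d}_i$ and $\xi_i(t) = u_{i-1}(t) - u_i(t)$, under which the dynamics reduce to (\ref{eq:dynamics-y}) exactly as before. Rewriting (\ref{eq:cost-col}) in these coordinates turns the problem into the minimization of
\[
\omega_i y_i^\top(T) y_i(T) + \frac{1}{\mu_i (y_i(T) + \hat{d}_i - \hat{r}_i)^\top (y_i(T) + \hat{d}_i - \hat{r}_i) + \epsilon} + \int_0^T \xi_i^2(t)\,\mathrm{dt}
\]
subject to (\ref{eq:dynamics-y}). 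Since the running cost and the dynamics are unchanged, the Hamiltonian construction, the stationarity condition $\xi_i = -B^\top \lambda_i$, and the costate ODE $\dot{\lambda}_i = -A^\top \lambda_i$ all carry over verbatim from \textit{Theorem~\ref{theorem-Nash}}.

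The next step is to derive the modified terminal transversality condition. Differentiating the new terminal cost with respect to $y_i(T)$ gives, with $f$ as in (\ref{eq:func-f}),
\[
\lambda_i(T) = \bigl(\omega_i - \mu_i f(y_i(T))\bigr) y_i(T) + \mu_i f(y_i(T))\,(\hat{r}_i - \hat{d}_i).
\]
Here the two proofs diverge: the terminal condition is nonlinear in $y_i(T)$, so the two-point boundary-value problem has no closed form. The remedy adopted in the theorem statement is to approximate $f(y_i(T))$ by $f(\mathrm{e}^{TA} y_i(0))$, i.e.\ to evaluate the collision-avoidance penalty along the free-drift prediction of $y_i$. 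Under this substitution $\lambda_i(T)$ becomes affine in $\hat{y}_i(T)$ and produces exactly the bracketed factor appearing in (\ref{eq:sol-xi-col}) and (\ref{eq:traj-col}).

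With the approximation in hand, the remaining work is routine linear algebra of the type already used in \textit{Theorem~\ref{theorem-Nash}}. I would solve the costate equation to obtain $\lambda_i(t) = \mathrm{e}^{(T-t)A^\top}\lambda_i(T)$, substitute into $\dot{y}_i = A y_i - B B^\top \lambda_i$, integrate, and collapse the resulting convolution against the definition (\ref{eq:Psi}) of $\Psi(t)$ to obtain (\ref{eq:traj-col}). Setting $t=T$ yields a linear equation for $\hat{y}_i(T)$ whose solution is precisely the $z_i(T)$ of (\ref{eq:terminal-state-est}). Feeding this back into $\xi_i = -B^\top \lambda_i$ gives (\ref{eq:sol-xi-col}), the summation (\ref{eq:Nash-est}) reconstructs $\hat{u}_i$, and undoing the change of variables returns the trajectory (\ref{eq:Nash-traj-est}).

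The step I expect to be the main obstacle is justifying the invertibility implicit in (\ref{eq:terminal-state-est}), namely that $I + \bigl(\omega_i - \mu_i f(\mathrm{e}^{TA} y_i(0))\bigr)\Psi(T)$ is nonsingular. In \textit{Theorem~\ref{theorem-Nash}} the coefficient $\omega_i$ was strictly positive and $\Psi(T)$ is symmetric positive semidefinite, so invertibility followed immediately from a spectral argument. Here the scalar $\omega_i - \mu_i f(\mathrm{e}^{TA} y_i(0))$ can turn negative whenever the predicted gap approaches $\hat{r}_i$, so I would either need a condition on $\omega_i$, $\mu_i$, $\epsilon$, and the initial separation that keeps $(\omega_i - \mu_i f)\Psi(T)$ away from having $-1$ as an eigenvalue, or a genericity argument excluding this degeneracy. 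Once that obstruction is resolved, the symmetry-plus-spectrum reasoning from the proof of \textit{Theorem~\ref{theorem-Nash}} transfers with minor modification and closes out the argument.
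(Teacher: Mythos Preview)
Your proposal is correct and follows essentially the same route as the paper: change of variables to $(y_i,\xi_i)$, Hamiltonian/Pontryagin conditions with the nonlinear transversality condition, replacement of $f(y_i(T))$ by $f(\mathrm{e}^{TA}y_i(0))$, and then the same linear-algebraic cleanup as in \textit{Theorem~\ref{theorem-Nash}}. The one place you go further than the paper is the invertibility of $I+(\omega_i-\mu_i f)\Psi(T)$, which the paper's proof simply does not address; your caution there is warranted but is not needed to match the paper.
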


\begin{proof} The platoon control problem in (\ref{eq:dynamics}) and (\ref{eq:cost-col}) in terms of the state vector $y_i(t)$ and control input $\xi_i(t)$ reduces to the minimization of the following optimization
\begin{align*}
    \min_{\xi_i}\hat{\mathcal{J}}_i&=\mathcal{J}_i(\xi_i(t))+\nonumber\\&\frac{1}{\mu_i(y_i(T)+\hat{d}_i-\hat{r}_i)^\top(y_i(T)+\hat{d}_i-\hat{r}_i)+\epsilon}
\end{align*}
subject to (\ref{eq:dynamics-y}).

Define the Hamiltonian (\ref{eq:non-coop-hamil}) and by using the necessary conditions for optimality, we obtain (\ref{eq:neccesary-u}) and (\ref{eq:neccesary-lam}) with the following terminal condition
\begin{align}
    \lambda_i(T)=\omega_i y_i(T)-\mu_if(y_i(T))(y_i(T)+\hat{d}_i-\hat{r}_i) \label{eq:neccesary-lam-terminal}
\end{align}
for $i\in\{1,\ldots,N\}$.

The solution of (\ref{eq:neccesary-lam}) using the terminal condition (\ref{eq:neccesary-lam-terminal}) is given by
\begin{align}\label{eq:lamda-col}
    &\lambda_i(t)=\mathrm{e}^{(T-t)A^\top}\times&\nonumber\\&\Big(\big(\omega_i-\mu_if(y_i(T))\big)y_i(T)+\mu_if(y_i(T))(\hat{r}_i-\hat{d}_i)\Big).
\end{align}

Substituting (\ref{eq:lamda-col}), respectively, into (\ref{eq:neccesary-u}) and then into (\ref{eq:dynamics-y}), we have
\begin{align} 
    \xi&_i(t)=-B^\top\mathrm{e}^{(T-t)A^\top}\times\nonumber\\&\Big(\big(\omega_i-\mu_if(y_i(T))\big)y_i(T)+\mu_if(y_i(T))(\hat{r}_i-\hat{d}_i)\Big),\label{eq:xi-col}\\
    \dot{y}&_i(t)=Ay_i(t)-BB^\top\mathrm{e}^{(T-t)A^\top}\times\nonumber\\&\Big(\big(\omega_i-\mu_if(y_i(T))\big)y_i(T)+\mu_if(y_i(T))(\hat{r}_i-\hat{d}_i)\Big)\label{eq:dynamics-col}
\end{align}
for all $i\in\{1,\ldots,N\}$.

The solution of (\ref{eq:dynamics-col}) is given by
\begin{align*}
   y_i(t)&=\mathrm{e}^{tA}y_i(0)-\nonumber\\&\Psi(t)\Big(\big(\omega_i-\mu_if(y_i)\big)y_i(T)+\mu_if(y_i)(\hat{r}_i-\hat{d}_i)\Big) 
\end{align*}
where at $T$ it can be rearranged as the following
\begin{align*}
   \Big(I+\big(\omega_i-&\mu_if(y_i(T))\big)\Psi(T)\Big) y_i(T)=\nonumber\\&\mathrm{e}^{TA}y_i(0)-\mu_if(y_i(T))(\hat{r}_i-\hat{d}_i)\Psi(T) 
\end{align*}
or equivalently, 
\begin{align}\label{eq:traj-col-simp}
   y_i(T)=\Big(I+&\big(\omega_i-\mu_if(y_i(T))\big)\Psi(T)\Big)^{-1}\nonumber\\&\Big(\mathrm{e}^{TA}y_i(0)-\mu_if(y_i(T))(\hat{r}_i-\hat{d}_i)\Psi(T)\Big). 
\end{align}

It is obvious from (\ref{eq:traj-col-simp}) that every player $i$ (i.e., every vehicle in the platoon) for all $i\in\{1,\ldots,N\}$ requires the knowledge of $f(y_i(T))$ for every possible terminal state vector $y_i(T)$, which is too complex to acquire from the current expression. Therefore, control inputs $\xi_i(t)$ and their associated state trajectories $y_i(t)$ will not be available explicitly, and thus neither will the true Nash equilibrium $u_i(t)$ and its associated state trajectories $x_i(t)$. However, it is possible to obtain a simplified expression for $\xi_i(t)$ and $y_i(t)$ from (\ref{eq:traj-col-simp}) as follows.

Assume that every player $i$ utilizes $y_i(T)=\mathrm{e}^{TA}y_i(0)$ to calculate $f(y_i(T))$. Then we arrive at the terminal state estimation of $\hat{y}_i(T)$ from (\ref{eq:terminal-state-est}). Substituting $\hat{y}_i(T)$ into (\ref{eq:xi-col}) and (\ref{eq:dynamics-col}) we get the estimations of control inputs $\hat{\xi}_i(t)$ in (\ref{eq:sol-xi-col}) and their associated state trajectories $\hat{y}_i(t)$ in (\ref{eq:traj-col}), respectively. Therefore, the estimations of the unique Nash equilibrium actions and their associated state trajectories are given by (\ref{eq:Nash-est}) and (\ref{eq:Nash-traj-est}), respectively.   
\end{proof}

The proposed estimated solution approach is incapable of dealing with collision avoidance since it is designed to implement the collision avoidance behavior only at the horizon time $T$. To consider collision avoidance for all $t\in[0,T]$, the PIs (\ref{eq:cost-col}) must have the collision avoidance term inside the integration, which brings far more difficulty to designing an implementable solution.

To implement the estimated Nash strategy design approach to include collision avoidance for $t\in[0,T]$, we utilize the following solution
\begin{align} \label{eq:Nash-est-t}
        &\hat{\xi}_i(t)=-B^\top\mathrm{e}^{(T-t)A^\top}\times\nonumber\\&\Big(\big(\omega_i-\mu_if(\mathrm{e}^{tA}y_i(0))\big)z_i(t)+\mu_if(\mathrm{e}^{TA}y_i(0))(\hat{r}_i-\hat{d}_i)\Big)
\end{align}
and
\begin{align}\label{eq:Nash-traj-est-t}
    &\hat{y}_i(t)=\mathrm{e}^{tA}y_i(0)-\Psi(t)\times\nonumber\\&\Big(\big(\omega_i-\mu_if(\mathrm{e}^{tA}y_i(0))\big)z_i(t)+\mu_if(\mathrm{e}^{tA}y_i(0))(\hat{r}_i-\hat{d}_i)\Big) .
\end{align}
Note that the solution above still has an open-loop information structure since it consists only of the initial state vector and time. As it is seen from its definition in (\ref{eq:terminal-state-est}), the vector $z_i(t)$ is also a function of the initial state vector and time.

\section{Simulation Results}\label{simulation}

In this section, we provide simulation results to demonstrate the effectiveness of the proposed platoon control schemes in Section~\ref{open-Nash} and \ref{open-Nash-col}. Consider a homogeneous platoon of five vehicles, i.e., $N=4$, of which the front vehicle is the leader and is not subject to control, and the rest are the followers with their control inputs to be designed. The inertial time-lag parameter is arbitrarily selected as $\tau=0.5$. The initial states of the vehicles are
given by $x_0(0)= [23,2,0]^\top$, $x_1(0)=[18,2.5,1]^\top$, $x_2(0)=[11,3,1.5]^\top$, $x_3(0)=[6,1.5,0.8]^\top$, $x_4(0)=[1,2,1.2]^\top$. Suppose that vehicles in the desired platoon will be equally spaced by $d_1=d_2=d_3=d_4=2$. Also, let $r_1=r_2=r_3=r_4=1$, meaning that if  vehicle $i\in\{1,\ldots,4\}$ gets closer to its predecessor than $1$, a collision occurs. In the PIs, $\omega_1=6$, $\omega_2=3$, $\omega_3=8$, $\omega_4=5$, $\mu_1=12$, $\mu_2=10$, $\mu_3=1$, $\mu_4=5$,  and $T=10$ for the time interval of the game.

We first solve the platoon control problem without collision avoidance with the open-loop Nash strategy given by (\ref{eq:notion-u}) and its associated state trajectory (\ref{eq:notion-x}) in \textit{Theorem~\ref{theorem-Nash}}. Fig.~\ref{fig:Nash} shows the following vehicles achieving the desired platoon at the horizon time $T=10$. In addition, it also shows the longitudinal velocities and accelerations of the following vehicles reaching the lead vehicle's velocity and acceleration. However, a collision between the lead vehicle and vehicle 1 occurs approximately at $t=5$.  
\begin{figure*}[h]
    \centering
     \begin{minipage}[b]{1\textwidth}
    \includegraphics[width=0.24\textwidth]{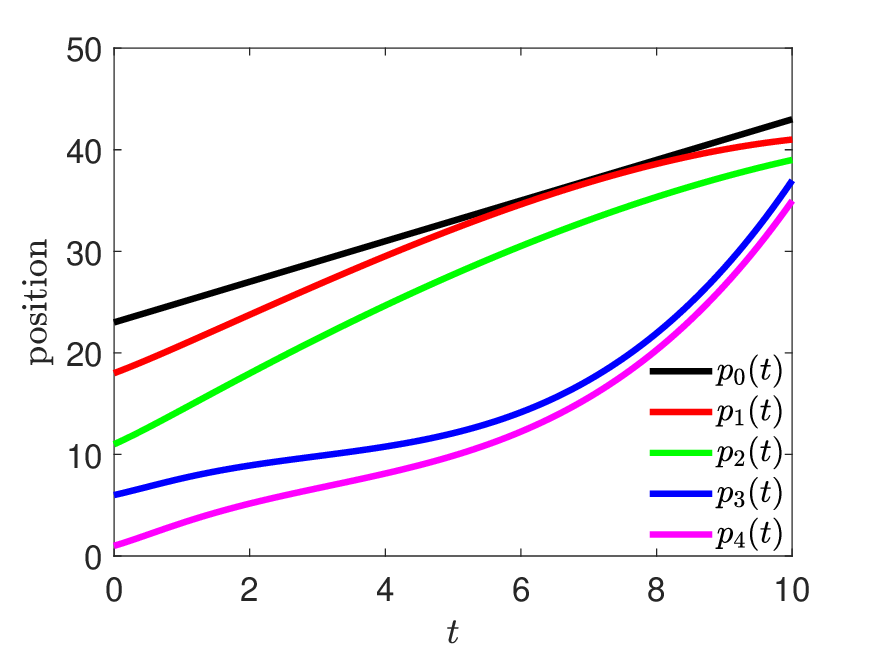}
    \includegraphics[width=0.24\textwidth]{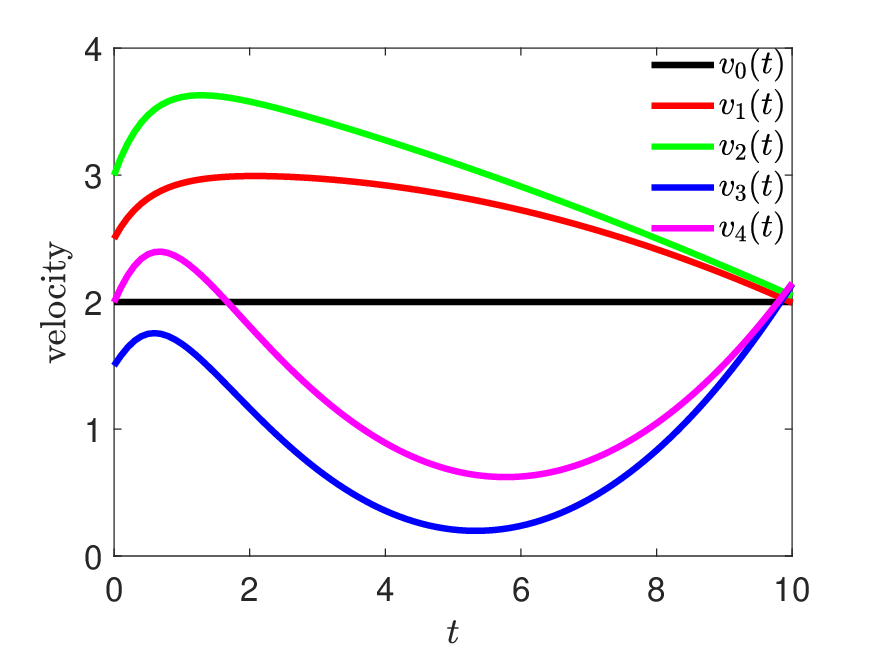}
    \includegraphics[width=0.24\textwidth]{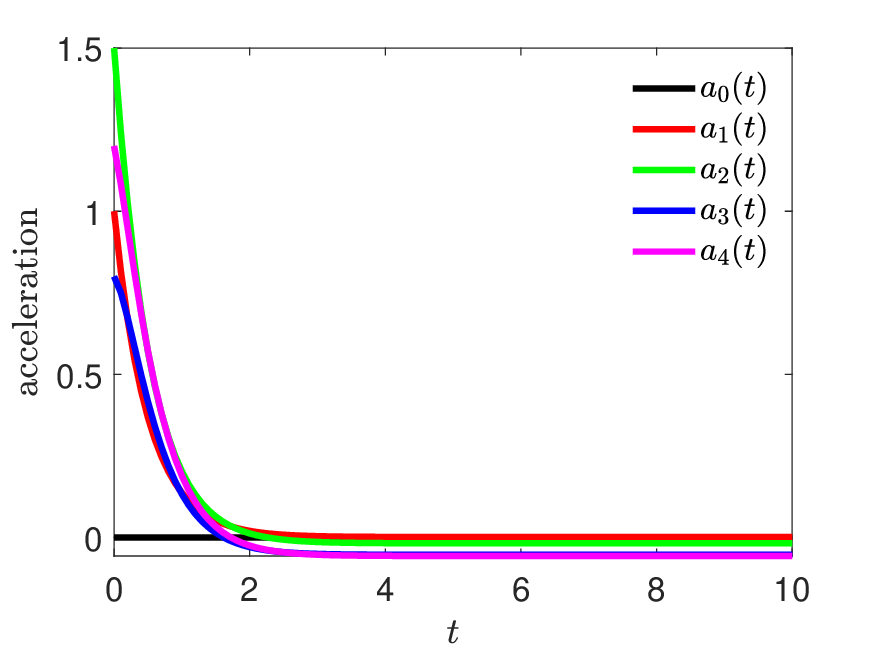}
    \includegraphics[width=0.24\textwidth]{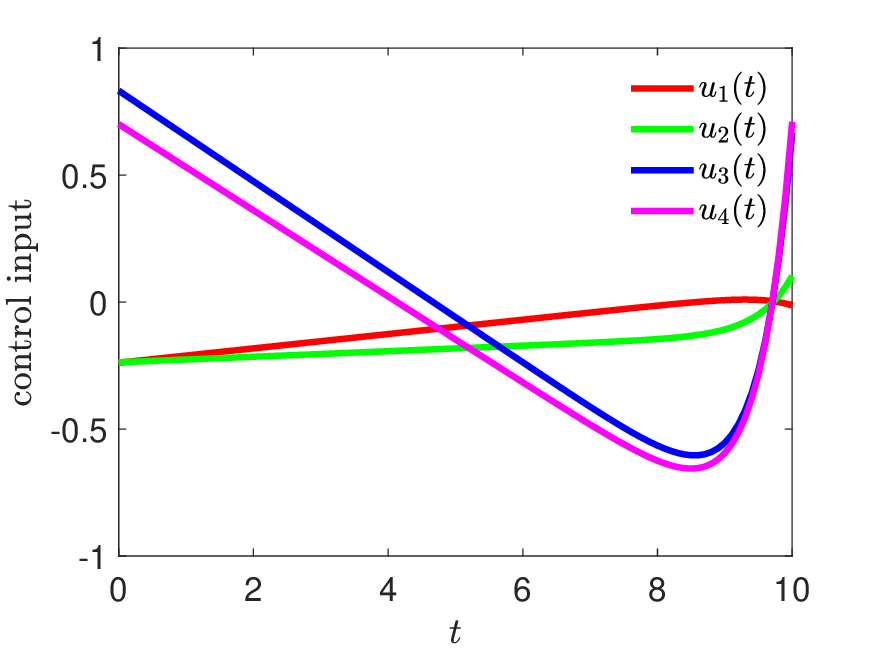}
    \end{minipage}
    \caption{Time histories of positions, velocities, accelerations, and control inputs of vehicles in the platoon control problem without collision avoidance. A collision between the lead vehicle and vehicle 1 is unavoidable.}
    \label{fig:Nash}
\end{figure*}

Once again, we resolve the platoon control problem with collision avoidance with the estimated Nash strategy solution (\ref{eq:Nash-est-t}) and (\ref{eq:Nash-traj-est-t}). The results in Fig.~\ref{fig:Nash-col} show the vehicles achieving basically the identical desired platoon at the horizon time $T=10$ as in the previous problem. However, it follows from Fig.~\ref{fig:Nash-col} that the collision between the lead vehicle and vehicle 1 is eradicated. Moreover, the vehicles achieve the desired platoon from approximately $t=3$ on, compared to the previous problem at horizon time $T=10$. This demonstrates the effectiveness of the proposed estimated solution implementation (\ref{eq:Nash-est-t}) and (\ref{eq:Nash-traj-est-t}) in coping with collision avoidance, early acquiring the desired platoon, and maintaining it. Note that the velocities of all follower vehicles should be equal to or greater than the leader's velocity (i.e., $v_i(t)\geq v_0$ for all $i\in\{1,\ldots,N\}$,) in order to avoid a collision risk~\cite{Lunze}. While this requirement is not seen in Fig.~\ref{fig:Nash} for the platoon control problem without collision avoidance, it is totally assured in Fig.~\ref{fig:Nash-col} for the platoon control problem with collision avoidance. A smaller velocity than the lead vehicle's velocity describes a braking maneuver for the following vehicle that creates a collision risk with its predecessor. Such maneuvers are seen for vehicles 3 and 4 in Fig.~\ref{fig:Nash}. Finally, the scalar function $f(\mathrm{e}^{tA}y_i(0))$ has been plotted for $t\in[0,T]$ in Fig.~\ref{fig:avoidance}. This function reveals collision risk for each following vehicle as a function of time and its initial state. It is seen that vehicles 1, 2, and 4 experience the peak collision risks at approximate times 4, 8, and 6, respectively, while vehicle 3 is not facing any collision risk the whole time.  
\begin{figure*}[h]
    \centering
    \begin{minipage}[b]{1\textwidth}
    \includegraphics[width=0.24\textwidth]{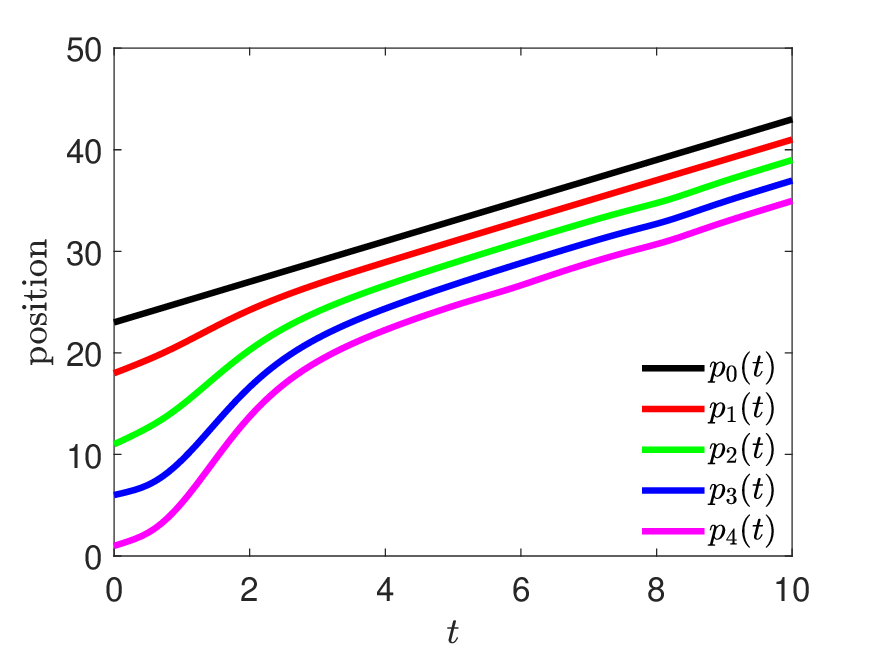}
    \includegraphics[width=0.24\textwidth]{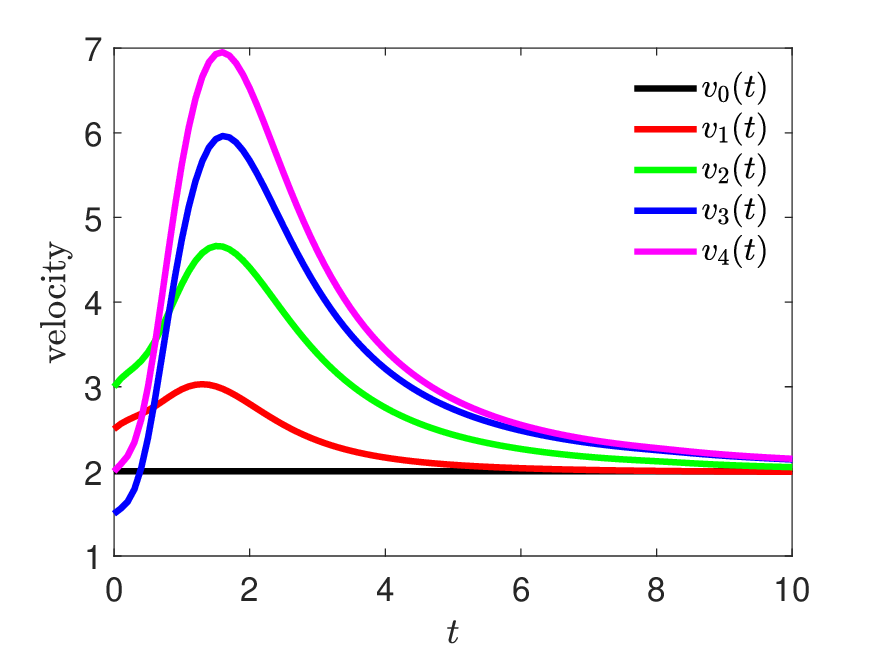}
    \includegraphics[width=0.24\textwidth]{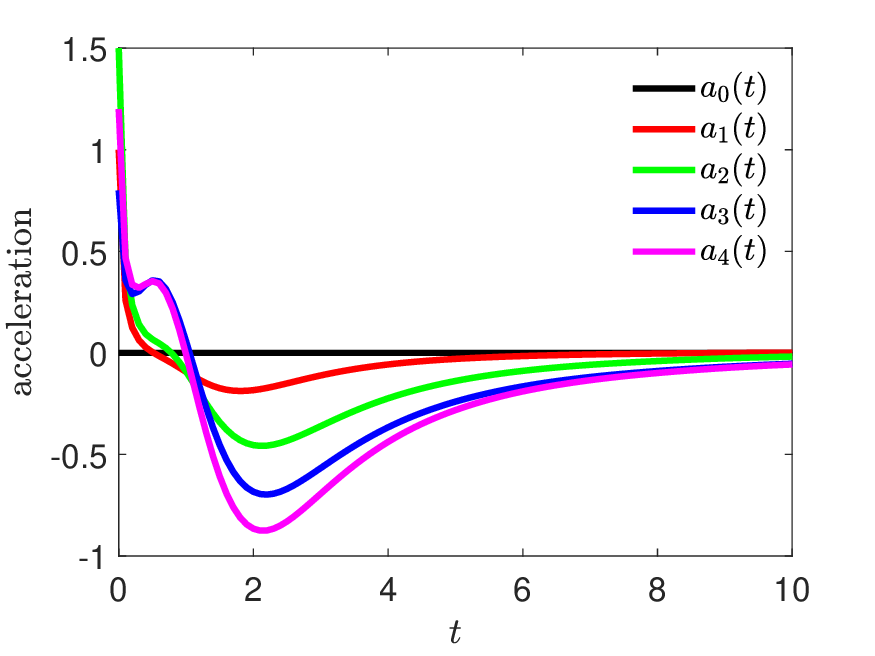}
    \includegraphics[width=0.24\textwidth]{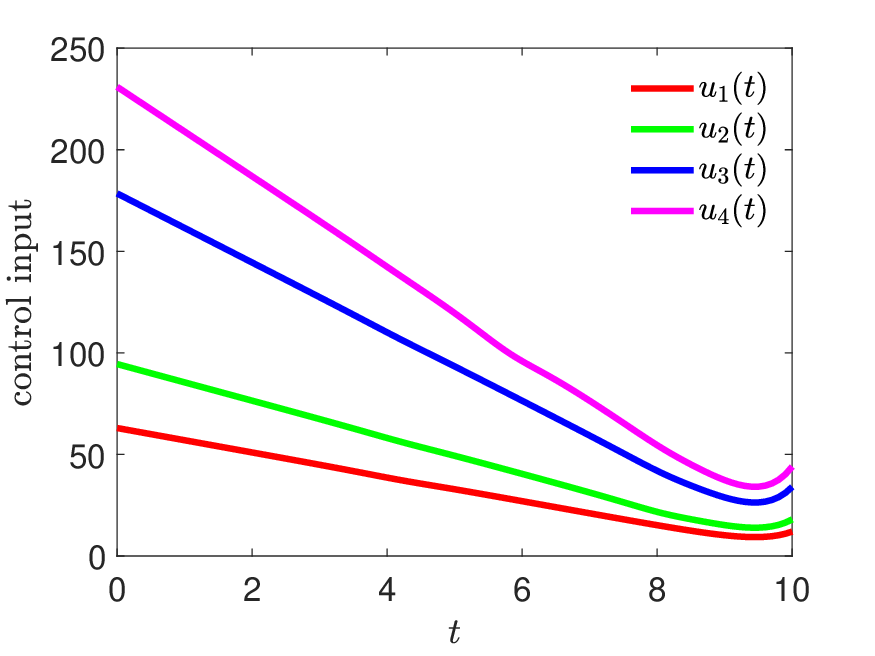}
    \end{minipage}
    \caption{Time histories of positions, velocities, accelerations, and control inputs of vehicles in the platoon control problem with collision avoidance. All following vehicles pursue collision-free trajectories.}
    \label{fig:Nash-col}
\end{figure*}
\begin{figure}[H]
\centering
       \includegraphics[width=0.25\textwidth]{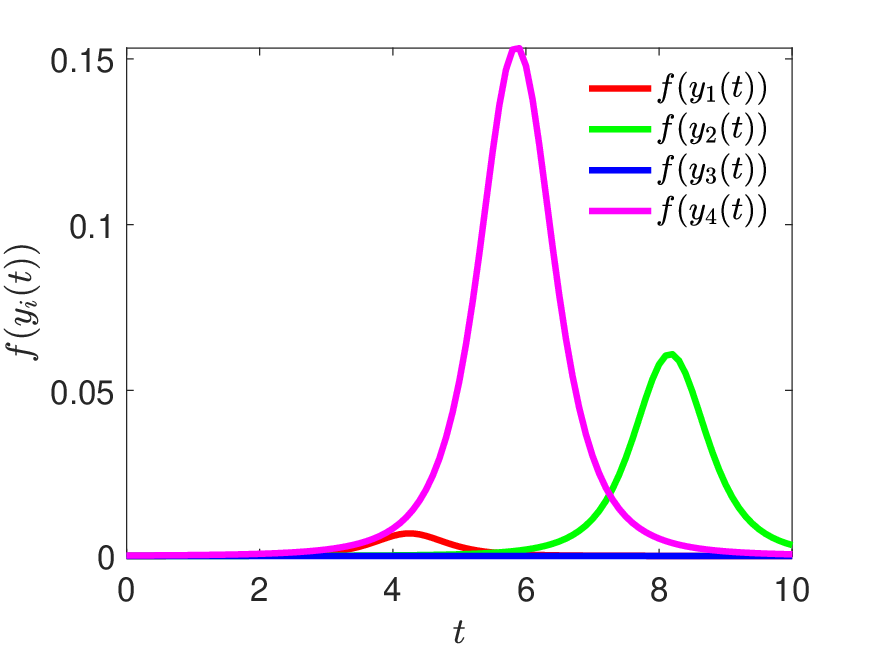}
\caption{Time histories of the scalar function $f(\mathrm{e}^{tA}y_i(0))$.}\label{fig:avoidance}
\end{figure}

\section{Conclusions}\label{conclusions}
In this paper, we have introduced differential game models for the homogeneous predecessor-following vehicle platoon control problem without and with collision avoidance. We obtained the closed-form expression for the unique Nash equilibrium and its associated state trajectories for the following vehicles. Simulation results have shown that the following vehicles acquire the desired platoon by committing to their self-enforcing controller based on Nash equilibrium actions. Furthermore, collision avoidance was considered in the platoon control problem, and the estimated Nash solution was proposed. The effectiveness of the proposed estimated Nash strategy design approach was observed in the simulation results. Future work will include more sophisticated platoon information topologies for connected environments, investigating feedback Nash equilibrium under feedback information differential games, and studying the string stability of the platoon.

\section*{Acknowledgment}

This work was supported by SGS, V\v{S}B - Technical University of Ostrava, Czech Republic, under grant No. SP2023/012 “Parallel processing of Big Data X”.

\end{document}